\newcommand*\diff{\mathop{}\!\mathrm{d}}
\theoremstyle{plain}
\newtheorem{prop}{Proposition}
\newtheorem{lem}{Lemma}
\newtheorem{thm}{Theorem}
\theoremstyle{definition}
\newtheorem{defn}{Definition}
\newenvironment{ex}
  {\pushQED{\qed}\examplex}
  {\popQED\endexamplex}
\newcommand{\R}{\mathbb{R}}
\newcommand{\N}{\mathbb{N}}
\renewcommand{\P}{\mathbb{P}}
\newcommand{\E}{\mathbb{E}}
\newcommand{\mc}{\mathcal}
\newcommand{\msf}{\mathsf}
\newcommand{\mb}{\mathbb}
\newcommand*{\from}{\colon}
\title{Using Participants' Utility Functions\\ to Compare Versions of Differential Privacy}
\author{Nitin Kohli and Michael Carl Tschantz\thanks{This work was primarily conducted while the first author was at the School of Information, University of California, Berkeley, and the second author was at the International Computer Science Institute.  The second author gratefully acknowledges funding support from
the National Science Foundation
(Grant~1704985). %
The opinions in this paper are those of the authors and do not necessarily reflect the opinions of any funding sponsor or the United States Government.}}
\date{}
\begin{document}

\maketitle

\begin{abstract}
We use decision theory to compare variants of differential privacy from the perspective of prospective study participants. We posit the existence of a preference ordering on the set of potential consequences that study participants can incur, which enables the analysis of individual utility functions. Drawing upon the theory of measurement, we argue that changes in expected utilities should be measured via the classic Euclidean metric. We then consider the question of which privacy guarantees would be more appealing for individuals under different decision settings. Through our analysis, we found that the nature of the potential participant’s utility function, along with the specific values of $\epsilon$ and $\delta$, can greatly alter which privacy guarantees are preferable. 
\end{abstract}

\section{Introduction}
\label{sec:intro}

In 2006, Dwork, McSherry, Nissim, and Smith introduced the concept of $\epsilon$-indistinguishability to enable the computation of privacy-preserving statistics \cite{dwork06crypto}. This concept -- which would be renamed $\epsilon$-differential privacy  within the same year \cite{dwork06icalp} -- is a  precise mathematical property of an algorithm that bounds how their outputs may vary as their inputs vary. The term ``differential privacy'' sometimes refers to this precise property, but also refers a series of related properties that extend or alter the original one~\cite[e.g.,][]{dwork06eurocrypt, mironov2017renyi, dwork2016concentrated, bun2016concentrated}. 

These variants of differential privacy vary in how they bound the amount that the outputs may change in response to changes in the inputs. The variants most related to the original $\epsilon$-differential privacy are each a way of bounding the effect size of each input~\cite{tschantz20sp}, and it is on these that we will focus. Given the long history of debates over the correct measure of effect sizes~\cite[e.g.,][]{sormani17multi}, it should come as no surprise that there's no consensus on which versions of differential privacy are the best, or even agreement on what would make one best.

There are different points of view that one can adopt when comparing different definitions of differential privacy. For example, Beimel~et~al.\@ compare $\epsilon$-differential privacy and $(\epsilon,\delta)$-differential privacy from the perspective of sample complexity~\cite{beimel14tc}. Another perspective is to frame comparisons in terms of a \textit{privacy-``utility''} tradeoff \cite{dwork2019differential}, where the notion of utility is in reference to the ``utility of an analysis’’ from the surveyor’s perspective.\footnote{Given the polysemous nature of the word ``utility,'' others have referred to this as the privacy-accuracy tradeoff to more precisely convey that this tradeoff effects the accuracy of the statistic \cite{blumenstock2023big}. Others have refined the concept of the privacy-accuracy tradeoff to settings where the accuracy of the interventions based on a statistic are more important that the accuracy the statistic itself, called the privacy-intervention accuracy tradeoff \cite{kohli2023privacy}.} 

In our inquiry, we consider how different definitions of differential privacy compare from the point of view of potential study participants. We approach this question using the tools of decision theory. In particular, we look at how the properties of a survey participant's utility functions affect which of absolute $\epsilon$-differential privacy,  $(\epsilon, \delta)$-approximate differential privacy, and absolute $\delta$-differential privacy would be more appealing to them, where we use the last term to refer to $(0, \delta)$-approximate differential privacy. 

The structure of our analysis is as follows. We posit the existence of a preference ordering on the set of potential consequences that study participants can incur, which enables the analysis of individual utility functions. Drawing upon the theory of measurement, we argue that changes in expected utilities should be measured via the classic Euclidean distance. With this measurement tool in hand, we set out to explore how different variants of differential privacy effect participants' utility functions. Through our analysis, we found that the nature of the potential participant’s utility function can alter which privacy guarantees would be more appealing.

When individual utility functions are real-valued and bounded, we show there are situations where individuals' may forgo joining a pure $\epsilon$-differentially private study. In such situations, there is a compensation scheme that induces participation. Interestingly, under this compensation scheme we find that there are values of $\epsilon, \epsilon', \delta' > 0$ such that inducing participation with pure $\epsilon$-differential privacy is more expensive that using $(\epsilon',\delta')$-differential privacy. 

However, when individuals' utility functions are allowed to take hyperreal values, we are able to construct decision problems in which no amount of compensation can encourage participation. On the other hand, we are also able to construct decision problems using hyperreal utility functions where individuals will only take part in a study when pure $\epsilon$-differential privacy is used. Taken together, the examples and analysis in this manuscript speak to the context-specific dependencies of using utility theory to understand the effects that privacy-enhancing technologies have on participants. For this reason, our work can be seen as contributing a lineage of research documenting the nuanced relationship between privacy and economics ~\cite{acquisti2016economics, laskowski2014government, johnson2015caviar, acquisti2022economics}. 

\section{Mathematical Preliminaries}
\label{sec:preliminaries}

In this section, we introduce the mathematical primitives and notation that we will utilize throughout this study. We begin by introducing differential privacy, and recall some fundamental results of particular interest for this study. We then present a deep-dive in the fundamentals of preference and utility theory to set the stage to clearly and rigorously study differential privacy from the perspective of utility theory.

\subsection{Background on Differential Privacy}
\label{sec:prelim_dp}

The only use case for differential privacy that we consider involves reporting the results of a survey.  It envisions a surveyor attempting to collect survey responses from people who may either provide them or not.  One reason a potential participant might decline to participate is privacy concerns about how their response might be reveal private information, even when aggregated with other participants' responses to form statistics \cite{dwork2017exposed, kohli2021leveraging}. 
To address this concern, the surveyor may promise to release only a differentially private statistic of dataset.  Conceptually speaking, the value of the statistic will not be affected much by whether any single person provides data. For this reason, it has been previously argued in that  each potential participant need not worry about how participating will affect their privacy and might as well participate \cite{dwork2014algorithmic}. Although in Section~\ref{sec:inf}, we'll consider some complications.

Consider a collection of $n$ agents, denoted as $[n] = \{1,...,n\}$. As is common in the game theory literature between computer science and economics, we will refer to agents as ``players'' interchangeably. 

For each agent $j$, let $\mc{X}_j$ denote the set of values they can report to a mechanism $\mc{M}$. We will assume $\mc{X}_j$ is finite and contains a null value $\perp$ that represents opting-out. Let $\mc{X} = \Pi_{j \in [n]} \mc{X}_j$ denote the space of inputs to the mechanism $\mc{M}$, and let $\mc{O}$ be the output space. Each $x \in \mc{X}$ can be represented by an $n$ dimensional vector $x = (x_1,...,x_n)$. We will say that two vectors $x, x' \in \mc{X}$ are \emph{neighboring} vectors if there exists a unique agent $j \in [n]$ such that $x_j \ne x'_j$ and $x_i = x'_i$ for all $i \ne j$.\footnote{In differential privacy, there is an alternative notion of neighboring datasets (referred to as \emph{unbounded}~\cite{kifer2011no} or \emph{presence--absence neighbors}~\cite{kohli2021leveraging}), which requires that one of $x$ and $x'$ must be $\bot$. The notion we choose in this analysis is often referred to as \emph{bounded}~\cite{kifer2011no} or \emph{switch neighbors}~\cite{kohli2021leveraging}. This difference doesn't affect the points we make herein.} That is, $x$ and $x'$ are neighbors if there is exactly one player whose value differs (namely between $x_j$ and $x'_j$). Let 
$$
\mc{X}_{-j} = \Pi_{i \in [n]-\{j\}} \mc{X}_i
$$
represent the space of all the messages that any player, other than $j$, can report to the mechanism. For any $x \in \mc{X}$ we denote the inputs by all players other than player $j$ as 
$$x_{-j} = (x_1,...,x_{j-1}, x_{j+1}, ..., x_{n}) \in \mc{X}_{-j}$$ 
We will refer to this as an \emph{environment} that agent $j$ operates within~\cite{kohli2018epsilon}, and write $x = (x_j, x_{-j})$.

A mechanism $\mc{M}: \mc{X} \rightarrow \mc{O}$ is a randomized mapping from $\mc{X}$ to $\mc{O}$. A particular subclass of interest to this study are \emph{differentially private mechanisms}. Intuitively speaking, a mechanism $\mc{M}$ satisfies differential privacy if the probability of any output from $\mc{M}$ is ``not too sensitive'' on any single component of $x \in \mc{X}$~\cite{kohli2023differential}.

\begin{defn} ($\epsilon$-Differential Privacy, \cite{dwork06crypto, dwork06icalp})
A randomized mechanism $\mc{M} \from \mc{X} \to \mc{O}$ satisfies \emph{$\epsilon$-differential privacy} if, for all neighboring $x,x' \in \mc{X}$ and for all measurable sets $\mc{S} \subseteq \mc{O}$,
$$
\P(\mc{M}(x) \in \mc{S}) \le e^{\epsilon} \P(\mc{M}(x') \in \mc{S})
$$
\end{defn}

An alternative way to view this is in terms of the reports of other agents. Suppose that $x$ and $x'$ are neighboring inputs that differ in the $j^{th}$ component. Then we can write $x = (x_j, x_{-j})$ and $x = (x'_j, x_{-j})$ for some $x_j, x'_j \in \mc{X}$ and $x_{-j} \in \mc{X}_{-j}$.
Mechanisms that are $\epsilon$-differentially private guarantee that, regardless of the environment $x_{-j}$, the choice of reporting $x_j$ or $x'_j$ doesn't alter the chance of a mechanism output by ``too much,'' where ``too much'' is made precise by the term $\epsilon$. This holds true whether the report $x_j$ is the opt-out message $\perp$ or any other value in $\mc{X}_j$. Differential privacy ensures that the output of a mechanism isn't too sensitive based on any single message $x_j$ sent to the mechanism. The parameter $\epsilon$ can be viewed as the size effect of an individual's message on the output in the worst-case environment~\cite{tschantz20sp}.

A useful relaxation of $\epsilon$-differential privacy is $(\epsilon,\delta)$-differential privacy~\cite{dwork06eurocrypt}. Mechanisms $\mc{M}$ that satisfy \emph{$(\epsilon,\delta)$-differential privacy} ensures that the probabilistic inequality from $\epsilon$-differential privacy holds for $S \subseteq \mc{O}$, but with some slippage $\delta \in [0,1]$.

\begin{defn} ($(\epsilon,\delta)$-Differential Privacy, \cite{dwork06eurocrypt})
A randomized mechanism $\mc{M} \from \mc{X} \to \mc{O}$ satisfies $(\epsilon, \delta)$-differential privacy if, for all neighboring $x,x' \in \mc{X}$ and for all measurable sets $\mc{S} \subseteq \mc{O}$,
$$
\P(\mc{M}(x) \in \mc{S}) \le e^{\epsilon} \P(\mc{M}(x') \in \mc{S}) + \delta
$$
\end{defn}

At times, we will refer to $(\epsilon, 0)$-differential privacy as \emph{pure $\epsilon$-differential privacy}. At the other end of the spectrum, we will refer to $(0,\delta)$-differential privacy as \emph{pure $\delta$-differential privacy}. And lastly, in the case where $\epsilon > 0$ and $\delta > 0$, we will refer to $(\epsilon, \delta)$-differential privacy as \emph{approximate differential privacy}. 

One of the main strengths of differential privacy is its robust guarantees \cite{kroll2019privacy}. Of particular interest to this study is the post-processing guarantee.

\begin{lem} (Post-Processing, Proposition 2.1 in \cite{dwork2014algorithmic})
Let $f:\mc{O} \rightarrow \mc{C}$ be a (possibly randomized) function. If $\mc{M}: \mc{X} \rightarrow \mc{O}$ satisfies $(\epsilon,\delta)$-differential privacy, then so too does $f \circ \mc{M}$.
\end{lem}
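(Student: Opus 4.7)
The plan is to reduce to the case where $f$ is deterministic and then recover the general statement by conditioning on the internal randomness of $f$. The deterministic case is essentially a direct application of the definition of $(\epsilon,\delta)$-differential privacy, with the only nontrivial ingredient being that the preimage $f^{-1}(\mc{T})$ of a measurable set is itself measurable.

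For the deterministic case, I would fix neighboring inputs $x, x' \in \mc{X}$ and a measurable set $\mc{T} \subseteq \mc{C}$, and set $\mc{S} = f^{-1}(\mc{T}) \subseteq \mc{O}$. The crucial observation is that the events $\{f(\mc{M}(x)) \in \mc{T}\}$ and $\{\mc{M}(x) \in \mc{S}\}$ coincide, so invoking the $(\epsilon,\delta)$-differential privacy of $\mc{M}$ applied to the measurable set $\mc{S}$ yields
$$\P(f(\mc{M}(x)) \in \mc{T}) = \P(\mc{M}(x) \in \mc{S}) \le e^{\epsilon} \P(\mc{M}(x') \in \mc{S}) + \delta = e^{\epsilon} \P(f(\mc{M}(x')) \in \mc{T}) + \delta,$$
which is exactly the post-processed guarantee. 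For randomized $f$, I would represent $f$ as a mixture $\{f_r\}_{r}$ of deterministic maps indexed by an auxiliary random seed $r$ drawn from some distribution $\mu$ independently of $\mc{M}$. The deterministic case applies to each $f_r$ pointwise, and because the inequality is affine in the probabilities, integrating both sides against $\mu$ preserves both the multiplicative factor $e^{\epsilon}$ and the additive slack $\delta$, delivering the inequality for $f \circ \mc{M}$.

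The main obstacle is genuinely modest: beyond the definitional unpacking, one only needs to confirm that the preimage under $f$ of a measurable set is measurable, and that a randomized function admits the mixture representation used in the second step. Both are standard facts, so no real technical difficulty arises; the entire argument is a clean two-line reduction to the definition of $(\epsilon,\delta)$-differential privacy.
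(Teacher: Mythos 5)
Your argument is correct and is essentially the same as the standard proof of Proposition 2.1 in Dwork and Roth, which this paper cites rather than reproves: reduce to deterministic $f$ via the preimage $f^{-1}(\mc{T})$, then handle randomized $f$ by writing it as a convex combination of deterministic maps and using that the $(\epsilon,\delta)$ inequality is preserved under averaging against the seed distribution. No gaps; the measurability and mixture-representation caveats you flag are exactly the only points that need care.
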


For notation simplicity, we will refer to $f\circ \mc{M}$ as $f_{\mc{M}}: \mc{X} \rightarrow \mc{C}$. So as long as $\mc{M}$ is differentially private with output space $\mc{O}$, the post-processing result guarantees that $f_{\mc{M}}$ is differentially private over this new space $\mc{C}$.

\subsection{Background on Utility Theory}
\label{sec:prelim_utility}

Next, we describe the mathematical preliminaries of utility theory. We consider a simple model, where each agent $j \in [n]$ has a \emph{preference} over a countable set of consequences $\mc{C}$. Mathematically, this preference is formulated by positing the existence of a binary relation $\succeq_j \subseteq \mc{C} \times \mc{C}$ for all $j \in [n]$ that is complete and transitive.

\begin{defn}
Consider a binary relation $\succeq_j \subseteq \mc{C} \times \mc{C}$. We say this relation is \emph{complete} if for all $a,b \in \mc{C}$, we have $a \succeq_j b$ or $b \succeq_j a$. We say a binary relation is \emph{transitive} if for all $a,b,c \in \mc{C}$, if $a \succeq_j b$ and $b \succeq_j c$, then $a \succeq_j c$. When $\succeq_j$ is complete and transitive, we say it is a \emph{preference ordering} (or simply a \textit{preference} as shorthand).
\end{defn}

For $c_1,c_2 \in \mc{C}$, we encode the idea that agent $j$ ``prefers consequence $c_1$ at least as much as consequence $c_2$'' by $(c_1,c_2) \in \succeq_j$, which we will often write as $c_1 \succeq_j c_2$. 

    Two remarks are in order. First, there will be times in our analysis where we require $\mc{C}$ to be finite, and not countably infinite. When this is the case, we will state it clearly inline. 
    
    Second, our choice to model preferences \emph{only} over consequences may seem narrow. Indeed, it is possible that individuals have preferences over a wider collection of things, such as preferences over different types of privacy protection~\cite{kohli2018epsilon, cummings2016empirical}. There are two reasons we consider this simple type of preference. The first reason as a point of comparison with prior work in differentially private mechanism design, where agents' utility function are typically modeled only as function the outcomes or consequences of a computation~\cite{dwork2014algorithmic, mcsherry07focs, hsu2014differential, pai2013privacy}. The second reason is that, even when preferences are construed narrowly, the results we derive are nonetheless nuanced. As such, they provide useful evidence about the contextual nature of results that connect privacy to utility theory.

Since preference relations are defined in terms of order theory, they can be quite difficult to work with numerically. As such, we will work with real-valued \textit{utility representations} $u_j: \mc{C} \rightarrow \R$ of these preference relations $\succeq_j$. 
(Later we'll consider functions going beyond $\R$, such as the hyperreals $^{\star}\R$.)
To ensure that a utility representation faithfully encodes the ordinal information from $\succeq_j$, we say a utility function $u_j$ \textit{represents} $\succeq_j$ if, for all $c_1,c_2 \in \mc{C}$,
$c_1 \succeq_j c_2 \iff u_j(c_1) \ge u_j(c_2)$. For countable $\mc{C}$, the assumptions that $\succeq_j$ is complete and transitive guarantees the existence of $u_j$. 

Utility functions are analytic devices that facilitate numerical reasoning about preferences. While this is a useful tool, care must be taken in practice, as multiple utility functions can represent the same preference $\succeq_j$. That is, if $u_j$ represents $\succeq_j$ and $h$ is any monotonically increasing function, then $h \circ u_j$ also represents $\succeq_j$. As such, utility representations are not unique, so caution must be taken when interpreting the specific numerical values assigned by utility functions.

The preference framework above was developed to reason about an agent's choice under certainty. When an agent faces probabilistic consequences, such as from a randomized mechanism $\Lambda$, we evaluate the agent's expected utility \cite{giocoli1998true, jensen1967introduction}. In principle, the behavior of the $\Lambda$ can change based on any of the agents' reports. To encode this behavior, suppose that each agent takes reports some $x_j \in \mc{X}_j$ to some mechanism $\Lambda$. For $x = (x_1,...,x_n)$, we write $\P(\Lambda(x) = c)$ to represent the probability that $\Lambda(x)$ yields $c$. With this machinery, we can now define expected utility.

\begin{defn}
    For an agent with utility $u_j$ over $\mc{C}$, the \textit{expected utility} of $x_j$, when other agents report $x_{-j}$, under $\Lambda(x_j, x_{-j})$ is given by
$$
\E_{c \sim \Lambda(x_j, x_{-j})}[u_j(c)] = \sum_{c \in \mc{C}} u_j(c)\P(\Lambda(x_j, x_{-j}) = c) 
$$
\end{defn}

Note that expected utility is defined with respect to a utility function $u_j$ and not the individual's preference $\succeq_j$. As such, different utility representations of $\succeq_j$ can lead to different numerical quantities. This is not necessarily an issue, provided that care is taken when interpreting expected utilities. 

In particular, we must restrict transformations of utility functions to positive affine transformations (and not monotonically increasing functions in general) -- otherwise, the expected utility formulation can provide inconsistent results \cite{briggs2023normative}. In the language of measurement theory, this means that our utility representations of $\succeq_j$ are interval measurements \cite{schoemaker1982expected}. That is, $u_j$ represents $\succeq_j$ $\iff$ $v_j(\cdot) = au_j(\cdot) + b$ also represents $\succeq_j$ for every $a > 0$ and for every $b \in \R$. 

\section{Connections Between Utility Theory and Differential Privacy}
\label{sec:connections}

We now define the process that underlies our analysis. Each agent $j \in [n]$ is given the opportunity to take part in a study. Agent $j$ submits a value $x_j \in \mc{X}_j$ to the mechanism $\mc{M}$ (which could be the message of opting-out of the study: $x_j = \perp$), which then processes the data as $\mc{M}(x_1,\ldots,x_n)$. This produces an outcome $o \in \mc{O}$. Based on the outcome, some decision is made using $o$ that has consequences for the agents, modelled by a (possibly randomized) function $f(o) = c \in \mc{C}$. Agents then experience some utility $u_j(c)$ based on the consequence $c$. Therefore, agent $j$'s expected utility under $f_{\mc{M}}(x)$ is
$$
\E_{c \sim f_{\mc{M}}(x)}[u_j(c)] = \sum_{c \in \mc{C}} u_j(c)\P(f_{\mc{M}}(x) = c)
$$
As such, we can reason about an agent's expected utility under $f_{\mc{M}}(x)$ by examining the convex combination of the utilities induced by their preference $\succeq_j$ and the coin-flips of $f_{\mc{M}}(x)$.

\subsection{The Ratio-Scale: A Candidate for Expected Utility Comparisons}
\label{subsec:ratioscale}

Under suitable assumptions on the utility function $u_j$, the pure differential privacy inequality can be equivalently represented in terms of expected utility. The following is a classic result \cite{dwork2014algorithmic}, which we restate below in the notation of this study.

\begin{prop} (Ratio-Scale Utility Bound, Claim 4 of Roth\footnote{See \href{https://www.cis.upenn.edu/~aaroth/courses/slides/privacymd/Lecture1.pdf}{Intro to Differential Privacy for Game Theorists, and Digital Goods
Auctions, Lecture 1} by Professor Aaron Roth for proof of this statement. Because of this equivalence, some authors use this expected value characterization in place of the definition of $\epsilon$-differential privacy ~\cite[e.g.,][]{pai2013privacy}.})
\label{prop:ratio}
A mechanism $\mc{M}:\mc{X} \rightarrow \mc{O}$ satisfies $\epsilon$-differential privacy if and only if for all utility functions $u_j \from \mc{C} \to \R_{\ge 0}$, for all $f: \mc{O} \rightarrow \mc{C}$, for all $x,x' \in \mc{X}$ neighboring vectors, and for all $j \in [n]$, 
$$
\E_{c \sim f_{\mc{M}}(x_j, x_{-j})}[u_j(c)] \le e^{\epsilon}\E_{c \sim f_{\mc{M}}(x'_j, x_{-j})}[u_j(c)] 
$$
\end{prop}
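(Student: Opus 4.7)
The plan is to prove the two directions separately, with the forward direction following almost immediately from post-processing, and the reverse direction requiring us to cook up a specific utility function and post-processing map that reduces an expected-utility inequality to the raw probability inequality in the definition of $\epsilon$-differential privacy.

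For the forward direction, assume $\mc{M}$ is $\epsilon$-differentially private. By the post-processing lemma, $f_{\mc{M}} = f \circ \mc{M}$ is also $\epsilon$-differentially private for any (possibly randomized) $f\from \mc{O} \to \mc{C}$. Since $\mc{C}$ is countable, I would apply the definition to each singleton $\{c\} \subseteq \mc{C}$ to get $\P(f_{\mc{M}}(x) = c) \le e^{\epsilon} \P(f_{\mc{M}}(x') = c)$ for all neighboring $x,x'$ and all $c \in \mc{C}$. Multiplying both sides by $u_j(c) \ge 0$ (this is where the nonnegativity hypothesis is essential, since otherwise the inequality could flip) and summing over $c$ yields
$$
\sum_{c \in \mc{C}} u_j(c)\P(f_{\mc{M}}(x) = c) \le e^{\epsilon} \sum_{c \in \mc{C}} u_j(c)\P(f_{\mc{M}}(x') = c),
$$
which is exactly the claimed expected-utility bound.

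For the reverse direction, assume the expected-utility bound holds universally. I need to show that for arbitrary neighboring $x, x'$ and arbitrary measurable $S \subseteq \mc{O}$, $\P(\mc{M}(x) \in S) \le e^{\epsilon} \P(\mc{M}(x') \in S)$. Assuming $\mc{C}$ has at least two distinct elements $c_0, c_1$ (the degenerate case $|\mc{C}| = 1$ is trivial), I would define the deterministic post-processor $f\from \mc{O} \to \mc{C}$ by $f(o) = c_1$ if $o \in S$ and $f(o) = c_0$ otherwise, together with the utility function $u_j(c_1) = 1$ and $u_j(c) = 0$ for every other $c \in \mc{C}$. Then $\E_{c \sim f_{\mc{M}}(x)}[u_j(c)] = \P(f_{\mc{M}}(x) = c_1) = \P(\mc{M}(x) \in S)$, and likewise for $x'$, so instantiating the hypothesized bound with this particular $f$ and $u_j$ recovers the $\epsilon$-differential privacy inequality on $S$.

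The main obstacle, such as it is, is the reverse direction: one has to realize that the expected-utility inequality, even though it is stated over an abstract consequence space, is rich enough to encode membership in any measurable subset of $\mc{O}$ by choosing $f$ as a thresholding map and $u_j$ as an indicator on the ``accepting'' consequence. Once that construction is in hand, both directions are one-line calculations, and the nonnegativity constraint $u_j \ge 0$ is seen to be exactly what is needed for the forward implication to go through term-by-term.
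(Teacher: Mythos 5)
Your proof is correct and follows essentially the same route as the cited source (Roth's Claim 4): the forward direction multiplies the per-consequence post-processed inequality by the nonnegative weights $u_j(c)$ and sums over countable $\mc{C}$, and the reverse direction instantiates the bound with an indicator utility on a two-valued post-processor to recover the definitional inequality for an arbitrary measurable $S \subseteq \mc{O}$. The only caveat worth noting is that the reverse direction genuinely requires $\mc{C}$ to contain at least two consequences (with $|\mc{C}|=1$ the expected-utility bound holds vacuously for any mechanism), which is an implicit hypothesis of the statement rather than a defect of your argument.
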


This proposition can be viewed as a ratio-scale for measuring changes in expected utilities. Namely, whenever both $\E_{c \sim f_{\mc{M}}(x_j, x_{-j})}[u_j(c)] \ne 0$ and $\E_{c \sim f_{\mc{M}}(x'_j, x_{-j})}[u_j(c)] \ne 0$, the symmetry of Proposition \ref{prop:ratio} implies

$$
e^{-\epsilon} \le \frac{\E_{c \sim f_{\mc{M}}(x_j, x_{-j})}[u_j(c)]}{\E_{c \sim f_{\mc{M}}(x'_j, x_{-j})}[u_j(c)]} \le e^{\epsilon}
$$

One would hope that we could use a ratio quantity like this to study preferences. However, depending on the range of utility functions in consideration, this may not be advisable. In this manuscript, we want our analysis to consider as wide an array of utility functions as possible. For this reason, we want $\succeq_j$ to be representable by functions that map to any subset of $\R$, and not just $\R_{\ge 0}$. With this in mind, we find in Example \ref{ex:hcua} that this inequality does not necessarily hold for $u_j$ that can attain both positive and negative values. This presents an issue for our analysis, as we need our comparison of expected utilities to draw the same conclusions for every pair of utility functions real-valued $u_j$ and $v_j$ that represent $\succeq_j$. 

\begin{ex}
\label{ex:hcua}
Suppose the House Committee on Un-American Activities (HCUA) conducts of a survey to determine the number of political dissidents in the Screen Writers Guild (SWG).
However, HCUA, attempting to appear to respect privacy, will have a trusted natural party use differential privacy to compute a noisy count of the number of political dissidents.

One member of SWG, John, is the most suspected of being a political dissident.
HCUA, not known for its reasonable tactics, will cause problems for John if even just a single member of SWG is found to be a political dissident under the presumption that it is John.
In principle, the amount grief that they give John (and therefore John's utility for each noisy count value that the algorithm might produce) should depend upon the expectations about the real (pre-noise) number of political dissidents.
However, the subtleties of differential privacy is largely lost on HCUA, who plan to give John a fixed amount of grief if and only if the noisy count is non-zero.

We can model this example as a game.  For simplicity, we presume that SWG has only two members who play in the game.  Thus, the count is bounded to be no more than $2$ even if both members participate. We let the outcome space be $\mc{O} = \{0,1,2\}$. The possible consequences for John are grief $\msf{g}$ or no-grief $\msf{n}$. As such, we take $\mc{C} = \{\msf{g},\msf{n}\}$.

Each player can send a message $x_i \in \{0,1\}$ to a mechanism $\mc{M}$ with $1$ indicating being a political dissident. Let $\mc{M}$ be a differentially private mechanism that works as follows: The mechanism computes $x_1+x_2$, and returns a value $z \in \mc{O}$ with probability $p_z$ according to the probability distribution below.
\begin{center}
\renewcommand{\arraystretch}{1.3}
 \begin{tabular}{@{}c c c c@{}} 
 \toprule
 $x_1+x_2$ & $p_0$ & $p_1$ & $p_2$ \\
 \midrule
  $0$ & $\varphi$ & $e^{\epsilon}\varphi$ &  $e^{2\epsilon}\varphi$ \\ 
  $1$ & $\varphi$ & $e^{2\epsilon}\varphi$ & $e^{\epsilon}\varphi$ \\
  $2$ & $e^{\epsilon}\varphi$ & $e^{2\epsilon}\varphi$ & $\varphi$ \\
 \bottomrule
\end{tabular}
\end{center}
where $\varphi = (1+e^{\epsilon}+e^{2\epsilon})^{-1}$ and $\epsilon > 0$. 
Note that for all possible values of $x_1+x_2$ the mechanism indeed induces a valid probability distribution, as each $p_z > 0$ and
$$
p_0 + p_1 + p_2 =\varphi + e^{\epsilon}\varphi + e^{2\epsilon}\varphi = (1 + e^{\epsilon} + e^{2\epsilon})\varphi = 1
$$
By inspection, one can verify that $\mc{M}$ is $\epsilon$-differentially private. Consider the two states of the world: one where $(x_1,x_2) = (1,0)$ and another where $(x_1,x_2) = (1,1)$.

We let John be player two and use the utility function given by

$$
v_2(c) = 
\begin{cases} 
  2, & c=\msf{n} \\
  0, & c=\msf{g} \\
\end{cases}
$$
Then by Proposition \ref{prop:ratio}, we would conclude $\E_{c \sim f_{\mc{M}}(1,1)}[v_2(c)] \leq e^{\epsilon}\E_{c \sim f_{\mc{M}}(1,0)}[v_2(c)]$. 

Now, suppose John's utility function was instead
$$
u_2(c) = 
\begin{cases} 
  1, & c=\msf{n} \\
  -1, & c=\msf{g} \\
\end{cases}
$$
Since $v_2(\cdot) = u_2(\cdot) + 1$, both $v_2$ and $u_2$ are a positive affine transformation of one another, so both $u_2$ and $v_2$ represent the same preference $\succeq_j$. Under the same two states of the world $(1,0)$ and another where $(1,1)$,
\begin{align*}
    \E_{c \sim f_{\mc{M}}(1,0)}[u_2(c)] &= u_2(\msf{n})p_0 + u_2(\msf{g})(p_1 + p_2) \\
    & = (\varphi) + (-1)(e^{2\epsilon}\varphi + e^{\epsilon}\varphi) \\
    & = \varphi (1 - e^{2\epsilon} - e^{\epsilon})
\end{align*}
and
\begin{align*}
   \E_{c \sim f_{\mc{M}}(1,1)}[u_2(z)] &= u_2(\msf{n})p_0 + u_2(\msf{g})(p_1 + p_2) \\
    & = (1)(e^{\epsilon}\varphi) + (-1)(e^{2\epsilon}\varphi + \varphi)\\
    & = \varphi (e^{\epsilon} - e^{2\epsilon} - 1)
\end{align*}
Next, note that $-e^{3\epsilon} < -1$. But then, 
\begin{align*}
    e^{\epsilon}\E_{c \sim f_{\mc{M}}(1,0)}[u_2(c)] & = e^{\epsilon}\varphi (1 - e^{2\epsilon} - e^{\epsilon})\\
    & = \varphi (e^{\epsilon} - e^{3\epsilon} - e^{2\epsilon}) \\
    & < \varphi (e^{\epsilon} - 1 - e^{2\epsilon}) \\
    & = \E_{c \sim f_{\mc{M}}(1,1)}[u_2(c)]
\end{align*}
Thus, it is not the case that $\E_{c \sim f_{\mc{M}}(1,1)}[u_2(c)] \leq e^{\epsilon}\E_{c \sim f_{\mc{M}}(1,0)}[u_2(c)]$ despite the input differing by a single value.
This may be surprising to John who might suspect that his expected utility will change by at most a factor of $e^\epsilon$ whether he says he's a political dissident or not.
\end{ex}

\subsection{Measurement Types Matter: Ratio-Scales are Fragile for Interval Measures}
\label{subsec:fragile}

Example \ref{ex:hcua} shows that ratios of expected utilities can provide inconsistent conclusions depending on the particular choice of utility function. We saw that our analysis of John's actions were inconsistent, \emph{under the same differentially private mechanism}. At an intuitive level, this arises since $u_j$ is but one interval measure of $\succeq$ \cite{schoemaker1982expected}, so there is no true notion of an absolute zero for utility representations over $\R$ (after all, there are an infinite number of positive affine transformations available to rescale $u_j$ while preserving $\succeq_j$). Therefore, Proposition \ref{prop:ratio}, while mathematically true whenever $\text{Range}(u) \subseteq \R_{\ge 0}$, is in general a fragile way to study changes in utility under different differentially private mechanisms.

However, differences of expected utilities will produce consistent conclusions. To see why, recall that positive affine transformations respect $\succeq_j$ under expected utilities. That is, if $u_j$ represents $\succeq_j$ and $v_j(c) = au_j(c) + b$ for any $a >0 $ and any $b \in \R$, then $v_j$ represents $\succeq_j$ as well. 

Let $\Lambda$ be a mechanism (not necessarily $\epsilon$-differentially private), $x = (x_j,x_{-j})$, and $x' = (x_j',x_{-j})$. Consider $\Lambda(x)$ and $\Lambda(x')$ . Then,

$$
\E_{c\sim \Lambda(x)}[v_j(c)] - \E_{ c\sim \Lambda(x')}[v_j(c)] = a   \bigg( \E_{ c\sim \Lambda(x)}[u_j(c)] - \E_{ c\sim \Lambda(x')}(u_j(c)) \bigg)
$$

Since $a > 0$,  $\E_{ c\sim \Lambda(x)}[v_j(c)] \ge \E_{ c\sim \Lambda(x')}[v_j(c)] \iff \E_{ c\sim \Lambda(x)}[u_j(c)] \ge \E_{ c\sim \Lambda(x')}[u_j(c)]$, so we do not alter the ordering of expected utilities under a positive affine change of scale. Additionally, the magnitude of differences are preserved up to a positive constant, as 
$$
\bigg| \E_{c\sim \Lambda(x)}[v_j(c)] - \E_{ c\sim \Lambda(x')}[v_j(c)] \bigg| = a   \bigg| \E_{ c\sim \Lambda(x)}[u_j(c)] - \E_{ c\sim \Lambda(x')}[u_j(c)] \bigg|
$$

Therefore the classic Euclidean metric can be used as our yardstick to measure differences in a data subject’s utility under differential privacy, without fear that a positive affine change of scale would change the decision-theoretic analysis. 

\subsection{Worst-Case Expected Utility Differences Under Euclidean Distance}
\label{sec:util_diff_euclid}

To circumvent the non-negativity restriction placed on $u_j$ in Proposition \ref{prop:ratio}, we present an alternative characterization based on the worst-case difference in expected utilities that any agent can incur in under an $(\epsilon, \delta)$-differentially private mechanism.

\begin{thm} (Euclidean-Scale Utility Bound)
\label{thm:nominal}
If $\mc{M}$ satisfies $(\epsilon, \delta)$-differential privacy, then for all agents $j \in [n]$, finite $\mc{C}$, utility functions $u_j:\mc{C} \to \R$, functions $f: \mc{O} \rightarrow \mc{C}$, environments $x_{-j} \in \mc{X}_{-j}$, and responses $x_j, x'_j \in \mc{X}_j$,
$$
\bigg|\E_{c \sim f_{\mc{M}}(x_j,x_{-j})} [u_j(c)] - \E_{c \sim f_{\mc{M}}(x'_j,x_{-j})} [u_j(c)] \bigg| \le (e^{\epsilon} - 1 + \delta|\mc{C}|)\bigg(\max_c(u_j(c)) - \min_c(u_j(c))\bigg)
$$
\end{thm}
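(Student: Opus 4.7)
The plan is to reduce to the case of a non-negative utility function via a translation (which is invisible to differences of expectations), and then to apply the $(\epsilon, \delta)$-differential privacy inequality pointwise to the post-processed mechanism $f_{\mc{M}}$.

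First I would invoke the post-processing lemma to conclude that $f_{\mc M}$ is itself $(\epsilon, \delta)$-differentially private, so that for neighboring inputs $(x_j, x_{-j})$ and $(x_j', x_{-j})$ and every measurable subset $\mc S \subseteq \mc C$, the defining inequality of $(\epsilon, \delta)$-DP applies. I would abbreviate $p_c = \P(f_{\mc M}(x_j, x_{-j}) = c)$ and $q_c = \P(f_{\mc M}(x'_j, x_{-j}) = c)$, and specialize the DP inequality to the singleton $\mc S = \{c\}$ to obtain the pointwise bound $p_c \le e^{\epsilon} q_c + \delta$ (and the symmetric version) for every $c \in \mc C$.

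Next I would shift the utility by its minimum: define $\tilde u_j(c) = u_j(c) - \min_c u_j(c)$, so that $0 \le \tilde u_j(c) \le M$, where $M = \max_c u_j(c) - \min_c u_j(c)$. Because $\sum_c p_c = \sum_c q_c = 1$, any additive constant cancels from the difference of expectations, so
$$\E_{c \sim f_{\mc M}(x_j, x_{-j})}[u_j(c)] - \E_{c \sim f_{\mc M}(x'_j, x_{-j})}[u_j(c)] = \sum_{c \in \mc C} \tilde u_j(c) (p_c - q_c).$$
Dropping the terms with $p_c < q_c$ (which only decreases the right-hand side) and using $\tilde u_j(c) \le M$, this is at most $M \sum_{c : p_c \ge q_c}(p_c - q_c)$.

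Then I would apply the pointwise DP bound termwise: for each $c$ in the sum, $p_c - q_c \le (e^\epsilon - 1) q_c + \delta$. Summing and using $\sum_c q_c \le 1$ and that the index set has at most $|\mc C|$ elements gives $\sum_{c : p_c \ge q_c}(p_c - q_c) \le (e^\epsilon - 1) + \delta |\mc C|$. Combining yields
$$\E_{c \sim f_{\mc M}(x_j, x_{-j})}[u_j(c)] - \E_{c \sim f_{\mc M}(x'_j, x_{-j})}[u_j(c)] \le M\bigl(e^\epsilon - 1 + \delta |\mc C|\bigr),$$
and swapping the roles of $x_j$ and $x'_j$ (using the symmetric pointwise DP bound) gives the reverse inequality. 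Combining the two produces the absolute-value bound in the statement.

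I do not anticipate any real obstacle; the finiteness of $\mc C$ is what licenses the pointwise application and produces the $\delta |\mc C|$ term (a set-based application of DP would save the $|\mc C|$ factor but is not what the statement claims). The only subtlety worth flagging in the write-up is that the translation $u_j \mapsto \tilde u_j$ is harmless for differences of expectations even though it can badly distort ratios, which is precisely the moral of Section~\ref{subsec:fragile} motivating the Euclidean-scale formulation.
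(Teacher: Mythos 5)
Your proposal is correct and follows essentially the same route as the paper: reduce to a nonnegative bounded utility (the paper rescales to $[0,1]$, you merely translate to $[0,M]$ --- both are harmless for differences of expectations), apply the $(\epsilon,\delta)$-DP inequality to singletons $\{c\}$, and sum over the finite set $\mc{C}$ to collect the $e^{\epsilon}-1$ and $\delta|\mc{C}|$ contributions. The one genuine difference is how the two-sided bound is obtained: you get the reverse inequality for free by swapping the roles of $x_j$ and $x'_j$ and reusing the same one-sided argument, whereas the paper derives a separate lower bound from $\P(f_{\mc{M}}(x_j,x_{-j})=c) \ge e^{-\epsilon}(\P(f_{\mc{M}}(x'_j,x_{-j})=c)-\delta)$ and then needs the extra observation $e^{\epsilon}+e^{-\epsilon}\ge 2$ to absorb the $e^{-\epsilon}-1$ term into $-(e^{\epsilon}-1)$. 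Your symmetric treatment is cleaner and avoids that step entirely; your restriction of the sum to $\{c: p_c \ge q_c\}$ is also valid (the dropped terms are nonpositive) though not strictly necessary, and your side remark that applying DP to the set $\{c: p_c\ge q_c\}$ rather than to singletons would save the factor $|\mc{C}|$ is accurate --- the theorem as stated is simply the weaker, pointwise version.
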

\begin{proof}
Since $u_j$ is bounded and $\mc{C}$ is finite, it achieves a minimum and maximum. For notational simplicity, denote $\min_c(|u_j(c)|)$ as $m_j$ and $\max_c(|u_j(c)|)$ as $M_j$. If $m_j = M_j$ the result trivially holds, so we only consider the case when $m_j < M_j$.

Let $v_j(\cdot) = \alpha u_j(\cdot) + \beta$, where $\alpha = (M_j - m_j)^{-1}$ and $\beta = -m_j \alpha$. Then $v_j(c) \in [0,1]$ for all $c \in \mc{C}$.

For any environment $x_{-j} \in \mc{X}_{-j}$ and any pair of responses $x_j, x'_j \in \mc{X}_j$,

\begin{align*}
    \E_{c \sim f_{\mc{M}}(x_j,x_{-j})} [v_j(c)] &= \sum_{c\in \mc{C}} v_j(c) \P\big(f_M(x_j,x_{-j}) = c\big)\\
    &\le \sum_{c\in \mc{C}} v_j(c)\bigg(e^{\epsilon}\P\big(f_M(x_j',x_{-j}) = c\big) + \delta\bigg) \\
    &= e^{\epsilon}\sum_{c\in \mc{C}} v_j(c)\P\big(f_M(x_j',x_{-j}) = c\big) + \delta\sum_{c\in \mc{C}} v_j(c)\P\big(f_M(x_j',x_{-j}) = c\big) \\
    &\le e^{\epsilon}\sum_{c\in \mc{C}} v_j(c)\P\big(f_M(x_j',x_{-j}) = c\big) + \delta\sum_{c\in \mc{C}} \P\big(f_M(x_j',x_{-j}) = c\big) \\
    &= e^{\epsilon}\E_{c \sim f_{\mc{M}}(x'_j,x_{-j})} [v_j(c)]  + \delta|C| \\
    &= (e^{\epsilon}-1)\E_{c \sim f_{\mc{M}}(x'_j,x_{-j})} [v_j(c)] + \E_{c \sim f_{\mc{M}}(x'_j,x_{-j})} [v_j(c)]  + \delta|C| \\
    & \le e^{\epsilon}-1 + \E_{c \sim f_{\mc{M}}(x'_j,x_{-j})} [v_j(c)]  + \delta|C|
\end{align*}
where the first inequality follows from the definition of differential privacy, and the remaining inequalities follow since $v_j(c) \le 1$ for all $c \in \mc{C}$. Rearranging terms, it follows that 
$$\E_{c \sim f_{\mc{M}}(x_j,x_{-j})} [v_j(c)]-\E_{c \sim f_{\mc{M}}(x'_j,x_{-j})} [v_j(c)] \le e^{\epsilon}-1+ \delta|C|$$ 
Using the definition of $v_j$ and the linearity of expectation, we deduce
$$
\E_{c \sim f_{\mc{M}}(x_j,x_{-j})} [u_j(c)]-\E_{c \sim f_{\mc{M}}(x'_j,x_{-j})} [u_j(c)] \le \alpha^{-1}\big(e^{\epsilon}-1+ \delta|C|\big)
$$
Alternatively, 
\begin{align*}
    \E_{c \sim f_{\mc{M}}(x_j,x_{-j})} [v_j(c)] &= \sum_{c\in \mc{C}} v_j(c) \P\big(f_M(x_j,x_{-j}) = c\big)\\
    &\ge \sum_{c\in \mc{C}} v_j(c)e^{-\epsilon}\big(\P\big(f_M(x_j',x_{-j}) = c\big) - \delta\big) \\
    &= e^{-\epsilon} \sum_{c\in \mc{C}} v_j(c)\P\big(f_M(x_j',x_{-j}) = c\big) - \delta\sum_{c\in \mc{C}} v_j(c)P\big(f_M(x_j',x_{-j}) = c\big)\\
    & \ge e^{-\epsilon} \sum_{c\in \mc{C}} v_j(c)\P\big(f_M(x_j',x_{-j}) = c\big) - \delta\sum_{c\in \mc{C}} P\big(f_M(x_j',x_{-j}) = c\big) \\
    &= e^{-\epsilon}\E_{c \sim f_{\mc{M}}(x'_j,x_{-j})} [v_j(c)]  - \delta|C| \\
    &= (e^{-\epsilon}-1)\E_{c \sim f_{\mc{M}}(x'_j,x_{-j})} [v_j(c)] + \E_{c \sim f_{\mc{M}}(x'_j,x_{-j})}[v_j(c)]  - \delta|C|\\
    &\ge (e^{-\epsilon}-1) + \E_{c \sim f_{\mc{M}}(x'_j,x_{-j})}  - \delta|C|\\
\end{align*}
where the first inequality follows by the definition of differential privacy, the second inequality follows since $v_j(c) \le 1$ for all $c \in \mc{C}$, and the last inequality follows as $v_j(c) \ge 0$ for all $c \in \mc{C}$. Hence, 
$$\E_{c \sim f_{\mc{M}}(x_j,x_{-j})} [v_j(c)]-\E_{c \sim f_{\mc{M}}(x'_j,x_{-j})} [v_j(c)] \ge e^{-\epsilon}-1 -\delta|C|$$ 
Using the definition of $v_j$ and the linearity of expectation yet again, we have
$$
\E_{c \sim f_{\mc{M}}(x_j,x_{-j})} [u_j(c)]-\E_{c \sim f_{\mc{M}}(x'_j,x_{-j})} [u_j(c)] \ge \alpha^{-1}\big(e^{-\epsilon}-1 -\delta|C|\big) = -\alpha^{-1}\big(-e^{-\epsilon}+1 +\delta|C|\big)
$$
Since $\epsilon \ge 0$, we have $e^{\epsilon} + e^{-\epsilon} \ge 2$ by the first derivative test. Indeed, let $h(z)= e^{z} + e^{-z} - 2$. Then $h(0) =0$ and $h'(z) = e^{z} - e^{-z} > 0$ for all $z > 0$, implying $h(z) \ge 0$ for all $z \ge 0$. Replacing $z$ with $\epsilon$ yields  $e^{\epsilon} + e^{-\epsilon} \ge 2$. Algebraic manipulation produces $e^{\epsilon} - 1 \ge -e^{-\epsilon} + 1$, which implies
$$
\E_{c \sim f_{\mc{M}}(x_j,x_{-j})} [u_j(c)]-\E_{c \sim f_{\mc{M}}(x'_j,x_{-j})} [u_j(c)] \ge -\alpha^{-1}\big(e^{\epsilon}-1 +\delta|C|\big)
$$
Invoking the definition of absolute value and $\alpha$ yields
$$
\bigg|\E_{c \sim f_{\mc{M}}(x_j,x_{-j})} [u_j(c)] - \E_{c \sim f_{\mc{M}}(x'_j,x_{-j})} [u_j(c)] \bigg| \le (e^{\epsilon} - 1 + \delta|\mc{C}|)\bigg(\max_c(u_j(c)) - \min_c(u_j(c))\bigg)
$$
\end{proof}

Taking a moment to interpret this, we see that regardless of the environment $x_{-j}$, the difference in expected utilities that any agent can incur is bounded, so long as $|\mc{C}|,\epsilon,\delta$, and $u_j$ are finite. As a sanity check, when both $\epsilon$ and $\delta$ are 0, the right-hand side of the inequality is 0, implies the expected utilities are the same regardless of $x = (x_1,...,x_n)$. This matches intuition, as such a mechanism does not use any information from any $x$, so the expected utility does not change. Moving away from the case where both $\epsilon$ and $\delta$ are both 0, we see that this upper bound is increasing in both of these quantities. This matches intuition as well, as allowing the probability of mechanism outputs to deviate more based on input $x$'s could alter agent $j$'s expected utilities.

\section{Incentivizing Responses for Standard Utility Functions}
\label{sec:finite}

We now turn our attention to incentivizing agents to join a private study. We consider the following setting. A surveyor wants to conduct a survey amongst $n$ people. Each agent $j$ can choose to opt-out by reporting $\perp$, or they can submit a value from $a \in \mc{X}_j - \{ \perp \}$. At the time of reporting, the agent's do not know the reports of the other agents (i.e., they do not have any knowledge of the environment), nor do they know the utility functions of other agents. 

The surveyor is interested in trying to recruit as many participants to the survey as possible. However, the surveyor may not force anyone to respond and instead must select a survey mechanism that induce voluntary participation. In the language of mechanism design, this condition is known as the voluntary participation constraint (also commonly referred to as the individual rationality constraint ) \cite{jackson2014mechanism}.

\begin{defn}
    A mechanism $\mc{M}$ satisfies the \textit{voluntary participation constraint} if for every agent $j \in [n]$, there exists some $a \in \mc{X}_j - \{ \perp \}$ such that for all $x_{-j} \in \mc{X}_{-j}$,
    $$
    \E_{c \sim f_{\mc{M}}(a, x_{-j})}[u_j(c)] \ge \E_{c \sim f_{\mc{M}}(\perp, x_{-j})}[u_j(c)]
    $$
\end{defn}

As we show in Example \ref{ex:ir}, the voluntary participation constraint is not always satisfied by every differentially private mechanism. Before presenting the example, we introduce the Laplace mechanism from differential privacy.

\begin{defn}
    The Laplace distribution with scale $\lambda > 0$, denoted as $Lap(\lambda)$, has probability density function $p(z)  = (2\lambda)^{-1}\exp(-|z| / \lambda)$ for all $z \in \R$.
\end{defn}

\begin{lem} (Laplace Mechanism, Theorem 3.6 of \cite{dwork2014algorithmic})
    Let $q: \mc{X} \to \R$ be a function. Define the sensitivity of $q$ as 
    $$GS(q) = \sup_{x,x' \text{neighbors}} |q(x) - q(x')|$$ If $GS(q) \in (0,\infty)$, then $M(x) = q(x) + L$ satisfies $\epsilon$-differential privacy for $L \sim Lap(GS(q)/\epsilon)$.
\end{lem}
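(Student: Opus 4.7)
The plan is to prove this classical result directly from the definition of $\epsilon$-differential privacy via a pointwise comparison of output densities. First I would fix an arbitrary pair of neighboring inputs $x, x' \in \mc{X}$ and an arbitrary measurable set $S \subseteq \R$, with the goal of establishing $\P(M(x) \in S) \le e^{\epsilon}\P(M(x') \in S)$. Since $L \sim Lap(\lambda)$ with $\lambda = GS(q)/\epsilon$ is absolutely continuous, the output $M(x) = q(x) + L$ has density $p_x(z) = (2\lambda)^{-1}\exp(-|z - q(x)|/\lambda)$, and analogously $p_{x'}(z)$ for $M(x')$. The probabilities in the differential privacy inequality then reduce to Lebesgue integrals of these densities over $S$, so it suffices to bound the pointwise ratio $p_x(z)/p_{x'}(z)$ uniformly in $z$.

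The key step is the algebraic simplification
\[
\frac{p_x(z)}{p_{x'}(z)} \;=\; \exp\!\left( \frac{|z - q(x')| - |z - q(x)|}{\lambda} \right),
\]
combined with the reverse triangle inequality $|z - q(x')| - |z - q(x)| \le |q(x) - q(x')|$. By the definition of $GS(q)$ applied to the neighbors $x, x'$, the right-hand side is at most $GS(q)$, and substituting $\lambda = GS(q)/\epsilon$ collapses the bound to $p_x(z)/p_{x'}(z) \le e^{\epsilon}$ for all $z \in \R$. Integrating the resulting pointwise bound $p_x(z) \le e^{\epsilon} p_{x'}(z)$ over $S$ yields $\P(M(x) \in S) \le e^{\epsilon}\P(M(x') \in S)$, and since $x, x'$ and $S$ were arbitrary this is exactly $(\epsilon, 0)$-differential privacy.

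There is no genuine obstacle here — this is the textbook argument for the Laplace mechanism. The only care required is (i) applying the reverse triangle inequality in the correct direction so the two absolute-value terms from the Laplace densities combine favorably, and (ii) using the hypothesis $GS(q) \in (0, \infty)$ in both directions: strict positivity is needed so that $\lambda = GS(q)/\epsilon > 0$ and $Lap(\lambda)$ is well-defined, while finiteness is needed so that $\lambda < \infty$, keeping the density strictly positive everywhere and making the ratio manipulation valid.
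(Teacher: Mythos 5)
Your proof is correct and is precisely the standard density-ratio argument from Theorem 3.6 of Dwork and Roth, which is the source the paper cites for this lemma rather than proving it in-line. The pointwise bound via the (reverse) triangle inequality, the substitution $\lambda = GS(q)/\epsilon$, and the integration over an arbitrary measurable set $S$ all match the canonical proof, so there is nothing to add.
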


\begin{ex}
\label{ex:ir}
Consider $\mc{X}_j = \{\perp,1,2\}$, $\mc{C} = \{0,1\}$, and
$$
q(x) = \sum_{i \in [n]} x_i \mb{I}(x_i \ne \perp)
$$
The sensitivity of $q$ is 2, so $\mc{M}(x) = q(x) + L$ for $L \sim Lap(2/\epsilon)$ is $\epsilon$-differentially private. Also, suppose $f(o) = \mb{I}(o \le n-1) $, $u_j(c) = \mb{I}(c=1)$, and environment $x_{-j} = (1,...,1)$. Pick any $a \in \mc{X}_j - \{\perp\}$. Then $q(\perp, x_{-j}) \le q(a, x_{-j})$. So,
\begin{align*}
   \E_{c \sim f_{\mc{M}}(a, x_{-j})}[v_j(c,r)] & = \P(f_{\mc{M}}(a, x_{-j}) = 1) \\
   & = \P(\mc{M}(a, x_{-j}) \le n-1) \\
   & = \P(q(a, x_{-j}) + L \le n-1) \\
   & = \P(L \le n-1 - q(a, x_{-j})) \\
   & = \int_{-\infty}^{n-1 - q(a, x_{-j})} \frac{\epsilon}{4}\exp(-\epsilon|z|/2)dz\\
   & < \int_{-\infty}^{n-1 - q(\perp, x_{-j})} \frac{\epsilon}{4}\exp(-\epsilon|z|/2)dz\\
   & = \P(L \le n-1 - q(\perp, x_{-j})) \\
   & = \P(q(\perp, x_{-j}) + L \le n-1) \\
   & = \P(\mc{M}(\perp, x_{-j}) \le n-1) \\
   & = \P(f_{\mc{M}}(\perp, x_{-j}) = 1) \\
   & = \E_{c \sim f_{\mc{M}}(\perp, x_{-j})}[v_j(c,r)]
\end{align*}
Hence, $\mc{M}$ does not satisfy the voluntary participation constraint. Therefore, under the Laplace mechanism, this prospective study participant is strictly better off not joining the analysis. 
\end{ex}

Example \ref{ex:ir} automatically implies the following existence claim.

\begin{prop}
    There exists an $\epsilon$-differentially private mechanism $\mc{M}$, a post-processor $f$, a collection agents $[n]$, response spaces $\mc{X}_i$ for $i \in [n]$, and a real-valued utility function $u_j$ for some agent $j \in [n]$ such that the voluntary participation constraint does not hold.
\end{prop}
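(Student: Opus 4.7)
The proposition is essentially a packaging of Example \ref{ex:ir} as an existence claim, so my plan is to invoke that example directly and point to each component it provides. First I would instantiate $n$ as any positive integer (say $n \geq 2$ so the environment $x_{-j} = (1,\ldots,1)$ makes sense) and then read off from the example the response spaces $\mc{X}_i = \{\perp, 1, 2\}$, the counting query $q(x) = \sum_{i \in [n]} x_i \mb{I}(x_i \neq \perp)$, and the Laplace-based mechanism $\mc{M}(x) = q(x) + L$ with $L \sim \mathrm{Lap}(2/\epsilon)$. Since $q$ has global sensitivity $2$, the Laplace Mechanism lemma guarantees that $\mc{M}$ satisfies $\epsilon$-differential privacy, establishing the privacy side of the existence claim.

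Next I would specify the remaining ingredients: the post-processor $f(o) = \mb{I}(o \le n-1)$ maps into $\mc{C} = \{0,1\}$, and the real-valued utility function is $u_j(c) = \mb{I}(c = 1)$, which certainly takes values in $\R$. Fix any agent $j \in [n]$ and the environment $x_{-j} = (1,\ldots,1) \in \mc{X}_{-j}$. To verify that the voluntary participation constraint fails, it suffices to show that there is no $a \in \mc{X}_j - \{\perp\}$ for which participation weakly dominates opting out; equivalently, that for \emph{every} such $a$, one has $\E_{c \sim f_{\mc{M}}(a, x_{-j})}[u_j(c)] < \E_{c \sim f_{\mc{M}}(\perp, x_{-j})}[u_j(c)]$.

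The key observation, which is exactly the chain of equalities and the strict inequality worked out in Example \ref{ex:ir}, is that for any $a \in \{1,2\}$ we have $q(\perp, x_{-j}) < q(a, x_{-j})$, so the Laplace-tail integral $\P(L \le n - 1 - q(\cdot, x_{-j}))$ is strictly smaller when the argument is larger. Hence participating strictly decreases the probability of landing in the ``good'' event $\{f_{\mc{M}}(\cdot) = 1\}$, which in turn strictly decreases expected utility for $u_j = \mb{I}(c=1)$. Therefore no $a \in \mc{X}_j - \{\perp\}$ satisfies the voluntary participation inequality, and the constraint fails for this $(\mc{M}, f, [n], \{\mc{X}_i\}, u_j)$.

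There is no genuine obstacle here: the entire content is the computation already carried out in Example \ref{ex:ir}. The only care needed is bookkeeping, namely making sure every object demanded by the proposition's existential quantifiers is supplied and that the failure of voluntary participation is stated in its correct logical form (``for all $a \neq \perp$, participation is worse''), which is the negation of the definition's ``there exists $a \neq \perp$ such that participation is weakly preferred.'' Once that quantifier structure is aligned, the proof reduces to citing Example \ref{ex:ir} and concluding.
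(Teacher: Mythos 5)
Your proposal is correct and matches the paper exactly: the paper treats the proposition as an immediate corollary of Example~\ref{ex:ir}, which supplies precisely the mechanism, post-processor, response spaces, and utility function you list, and the strict inequality for every $a \in \mc{X}_j - \{\perp\}$ under the environment $x_{-j}=(1,\ldots,1)$ negates the voluntary participation constraint just as you describe. Your explicit attention to the quantifier structure of that negation is a welcome addition but not a departure from the paper's argument.
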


One possible way to incentivize agents to join the study is to compensate them \cite{mishra2014theory}. This compensation can be thought of as money, but in general it is any reward that increases an agent's expected utility. The surveyor can induce participation if there exists some $r_j \in \R$ and some report $a \in \mc{X}_j - \{\perp\}$ such that, for all environments $x_{-j} \in \mc{X}_{-j}$,
$$
\E_{c \sim f_{\mc{M}}(a, x_{-j})}[u_j(c)] + r_j \ge \E_{c \sim f_{\mc{M}}(\perp, x_{-j})}[u_j(c)]
$$

Theoretically, for the sorts of utility functions we're currently considering, this is always possible. In the case where the expected utility of opting in with $a$ exceeds the expected utility of opting out, we don't have to compensate the agent at all. In this case, $r_j=0$. Alternatively, in cases where the expected utility of opting out exceeds that of opting in, we can always incentivize $j$ to join the study, provided they have a bounded utility function. 

In fact, when $\mc{M}$ satisfies $(\epsilon,\delta)$-differential privacy, we can derive an analytic expression for $r_j$ in the worst-case. Set $r_j = (e^{\epsilon} - 1 + \delta|\mc{C}| )(\max_c(u_j(c)) - \min_c(u_j(c)))$. Then for every $a \in \mc{X}_j - \{\perp\}$ and for all environments $x_{-j} \in \mc{X}_{-j}$,
Theorem \ref{thm:nominal} implies
\begin{align*}
    \E_{c \sim f_{\mc{M}}(\perp, x_{-j})}[u_j(c)] - \E_{c \sim f_{\mc{M}}(a, x_{-j})}[u_j(c)] & \le \bigg| \E_{c \sim f_{\mc{M}}(\perp, x_{-j})}[u_j(c)] - \E_{c \sim f_{\mc{M}}(a, x_{-j})}[u_j(c)] \bigg| \\
    & \le (e^{\epsilon} - 1 + \delta|\mc{C}| )\bigg(\max_c(u_j(c)) - \min_c(u_j(c))\bigg) \\
    & = r_j
\end{align*}
which implies 
$$
\E_{c \sim f_{\mc{M}}(a, x_{-j})}[u_j(c)] + r_j \ge \E_{c \sim f_{\mc{M}}(\perp, x_{-j})}[u_j(c)]
$$
Thus, in either case we can find some real-valued $r_j$ to incentivize joining. This analysis yields the following proposition.

\begin{prop}
\label{prop:compensation}
For any bounded utility function $u_j$, the surveyor needs to compensate agent $j$ at most $r_j = (e^{\epsilon} - 1 + \delta|\mc{C}|)\big(\max_c(u_j(c)) - \min_c(u_j(c))\big)$ to induce participate.
\end{prop}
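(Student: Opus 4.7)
The plan is to invoke Theorem~\ref{thm:nominal} directly, since the Euclidean-scale bound was designed precisely to control differences of expected utilities under $(\epsilon,\delta)$-differential privacy. First, I would fix an arbitrary report $a \in \mc{X}_j - \{\perp\}$ and an arbitrary environment $x_{-j} \in \mc{X}_{-j}$. The goal is then to show that adding the scalar $r_j$ to the expected utility of reporting $a$ dominates the expected utility of opting out with $\perp$, i.e., that
\[
\E_{c \sim f_{\mc{M}}(a, x_{-j})}[u_j(c)] + r_j \;\ge\; \E_{c \sim f_{\mc{M}}(\perp, x_{-j})}[u_j(c)].
\]

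The main step is a three-line chain of inequalities. Starting from the difference $\E_{c \sim f_{\mc{M}}(\perp, x_{-j})}[u_j(c)] - \E_{c \sim f_{\mc{M}}(a, x_{-j})}[u_j(c)]$, I would upper-bound this by its absolute value (a trivial bound that is nevertheless the right move because Theorem~\ref{thm:nominal} is phrased for the absolute value), and then apply Theorem~\ref{thm:nominal} with $x_j = \perp$ and $x_j' = a$ to conclude that this absolute difference is at most $(e^{\epsilon} - 1 + \delta|\mc{C}|)(\max_c u_j(c) - \min_c u_j(c))$, which by definition is $r_j$. Rearranging yields the desired inequality. Since $a$ and $x_{-j}$ were arbitrary, the compensated mechanism satisfies the voluntary participation constraint under any environment.

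There is essentially no obstacle here: once Theorem~\ref{thm:nominal} is available, the proof is a one-shot application plus rearrangement. The only mild subtlety worth flagging in the writeup is that the voluntary participation constraint requires the existence of \emph{some} report $a$ that works uniformly in $x_{-j}$, whereas Theorem~\ref{thm:nominal} gives a bound that holds for every choice of $(a, x_{-j})$ simultaneously; hence any fixed $a \in \mc{X}_j - \{\perp\}$ witnesses the claim. A secondary remark to include is that in the favorable case where opting in already dominates opting out, the same $r_j$ works (with slack), since $r_j \ge 0$ always; so a single formula covers both cases without splitting the argument.
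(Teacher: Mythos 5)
Your proposal is correct and follows essentially the same route as the paper: the paper likewise sets $r_j$ equal to the bound from Theorem~\ref{thm:nominal}, upper-bounds the difference $\E_{c \sim f_{\mc{M}}(\perp, x_{-j})}[u_j(c)] - \E_{c \sim f_{\mc{M}}(a, x_{-j})}[u_j(c)]$ by its absolute value, applies the theorem, and rearranges. Your observation that $r_j \ge 0$ lets a single formula cover the case where opting in already dominates is a slight streamlining of the paper's explicit two-case discussion, but the substance is identical.
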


As $\epsilon \to 0^+$ and $\delta \to 0^+$, a surveyor using the incentive scheme suggested by Proposition~\ref{prop:compensation} has to compensate the agent less, as the privacy properties decrease the difference in the bounds on expected utilities between opting-in and opting-out. In particular, when compensation takes the form of money, the surveyor can pay a finite amount to induce participation under pure $\epsilon$, pure $\delta$, and approximate differential privacy (assuming unbounded rewards from money). 

Proposition~\ref{prop:compensation} enables the comparison of variants of differential privacy under the rubric of participation costs under the suggested scheme. Consider $(\epsilon, \delta)$ and $(\epsilon',\delta')$-differential privacy, where $\epsilon, \epsilon', \delta, \delta' \ge 0$. Using the compensation scheme $r_j$ from Proposition \ref{prop:compensation}, the participation cost of $(\epsilon, \delta)$-differential privacy is cheaper than $(\epsilon', \delta')$-differential privacy if and only if 

$$
e^{\epsilon} - 1 + \delta|\mc{C}| < e^{\epsilon'} - 1 + \delta'|\mc{C}| \iff e^{\epsilon} - e^{\epsilon'}  < |\mc{C}|(\delta' - \delta)
$$

This has three implications for the comparison of pure $\epsilon$, pure $\delta$, and approximate differential privacy. When a surveyor uses the compensation scheme $r_j$ from Proposition \ref{prop:compensation}:

\begin{enumerate}
    \item The pure $\epsilon$ guarantee makes it cheaper to induce participation compared to $(\epsilon',\delta')$-differential privacy if and only if  $e^{\epsilon} - e^{\epsilon'}  < |\mc{C}|\delta'$.
    \item The pure $\delta$ guarantee makes it cheaper to induce participation than $(\epsilon',\delta')$-differential privacy if and only if $ 1- e^{\epsilon'}  < |\mc{C}|(\delta' - \delta)$.
    \item  And lastly, the pure $\epsilon$ guarantee makes it cheaper to induce participation compared to pure $\delta'$ guarantee if and only if $e^{\epsilon} - 1 \le \delta'|\mc{C}| \iff \epsilon < \ln(1+\delta'|\mc{C}|)$.
\end{enumerate}

Therefore, when comparing variants of differential privacy under the rubric of participation costs under the compensation scheme $r_j$, we find the specific parameters value used will determine which variant is more cost effective.
However, since Proposition~\ref{prop:compensation} is merely a bound, more exact analyses may yield cheaper compensation. 

\section{Utility Analysis for Non-Standard Utility Functions}
\label{sec:inf}

In this section, we consider ``high stakes'' outcomes, which we model with utility functions that can take hyperreal values. To set the stage, we first must take a brief mathematical interlude into the world of hyperreal numbers and hyperreal utility functions.

\subsection{Hyperreal Numbers and Hyperreal Utility Functions}
\label{sec:hyper_background}

The hyperreal numbers $^{\star}\R$ are an extension of the real numbers $\R$ that introduce two types of non-standard quantities: \textit{infinitesimals} (numbers that are smaller than any real number other than zero, in absolute value) and \textit{unlimiteds} (numbers that are larger than any real number, in absolute value).\footnote{Other authors use the phrases \textit{infinities} in place of unlimiteds (e.g., \cite{davis2009introduction})} To keep our discussion self-contained, we present a sketch of a construction of the hyperreals following the presentation in Krakoff \cite{krakoff2015hyperreals}. Additionally, we only discuss aspects of the hyperreals necessary for our differential privacy and utility analysis. As such, we will focus on a limited number of the arithmetic properties of $^{\star}\R$. For a more thorough discussion of the hyperreals, we point the reader to works of Davis \cite{davis2009introduction}, Krakoff \cite{krakoff2015hyperreals}, Keisler \cite{keisler2013elementary}, and Goldblatt \cite{goldblatt2012lectures}. 

This construction of the hyperreals utilizes equivalence classes over infinite sequences in $\R$. Consider $\R^{\N}$, the set of infinite sequences over $\R$. Define the relation $\equiv \subseteq \R^{\N} \times \R^{\N}$ by $x\equiv y \iff \{n \in \N \text{ : } x_n = y_n\}$ is contained in a mathematical object known as an ultrafilter. To avoid certain set-theoretic nuances, it suffices for our discussion to conceptually view this relation $\equiv$ as declaring two sequences $x,y \in \R^{\N}$ are equivalent if and only if they disagree on finitely many terms. Then $\equiv$ is reflexive, symmetric, and transitive, and is hence an equivalence relation. Therefore $\equiv$ partitions $\R^{\N}$. For $x \in \R^{\N}$, denote the equivalence class of $x$ as $[x] = \{y \in \R^{\N} \text{ : } x \equiv y\}$. The hyperreals $^{\star}\R$ are defined as the set of equivalence classes of $\R^{\N}$ under $\equiv$. Namely, $^{\star}\R = \{[x] \text{ : } x \in \R^{\N}\} $.

Using this construction of $^{\star}\R$, we can now extend orderings and additional operations from $\R$. First, the relation $<$ can be extended from $\R$ to $^{\star}\R$. Conceptually speaking, for $x,y \in$ $^{\star}\R$, we say $x < y \iff$ $\{n \in \N \text{ : } x_n \ge y_n\}$ is finite. And second, arithmetic operations in $^{\star}\R$ can defined component-wise over representative sequences in $\R^{\N}$. Formally, for $x,y \in$ $^{\star}\R$, $x+y = [x_n + y_n]$ and $xy = [x_n y_n]$. Then $^{\star}\R$ is an ordered field. So in particular, arithmetic in $^{\star}\R$ enables distributivity of multiplication over addition.

With this construction, we can find numbers $\omega \in$ $^{\star}\R$ such that $\omega> r$ for all $r \in \R$. Such $\omega$ are called \textit{positive unlimiteds} \cite{keisler2013elementary}. For example, consider $\omega = (\omega_1,\omega_2,...)$ where $\omega_n = n^{th}$ prime number. Since there are infinitely many primes, for any real number $r$, the components of $(r,r,...)$ exceed or equal the components of $\omega$ finitely many times. Similarly, we can define the \textit{negative unlimiteds} as $\omega \in$ $^{\star}\R$ such that $\omega < r$ for all $r \in \R$. 

There are two consequences of the arithmetic defined above. First, if $p \in \R_{>0}$ and $\omega \in$ $^{\star}\R$ is a positive unlimited, then $p\omega$ is a positive unlimited. And second, if $\omega \in$ $^{\star}\R$ is a positive unlimited, then $-\omega$ is well-defined (and more precisely a negative unlimited).

Transitioning from hyperreal numbers back to decision theory, Herzberg proved the existence of hyperreal utility functions $u: \mc{X} \to$ $^{\star}\R$ that represent preferences $\succeq$ satisfying certain decision-theoretic axioms \cite{herzberg2011hyperreal}. Because of this existence result, we can now consider hyperreal utility functions in the context of differential privacy.

\subsection{Differential Privacy Under Hyperreal Utility Functions}
\label{sec:hyper_dp_util}

We begin by constructing a scenario where no amount of compensation can be used to incentivize and individual to joining a study.

\begin{ex}
\label{ex:hyper}
    Let $u(c) = -\omega\mb{I}(c=1)$ for some positive unlimited $\omega \in$ $^{\star}\R$. Consider $\mc{X}_j = \{\perp,1,2\}$, $\mc{C} = \{0,1\}$, and
$$
q(x) = \sum_{i \in [n]} x_i \mb{I}(x_i \ne \perp)
$$
The sensitivity of $q$ is 2, so $\mc{M}(x) = q(x) + L$ for $L \sim Lap(2/\epsilon)$ is $\epsilon$-differentially private. Also, suppose $f(o) = \mb{I}(o \ge n) $, and environment $x_{-j} = (1,\ldots,1)$. 

There's no reward real number $r$ large enough to recruit participants.  To see this, suppose there were some $r \in \R$ to recruit $j$ to submit $a \in \mc{X}_j - \{\perp\}$ to the mechanism.
The reward $r$ must be such that
\begin{align*} 
r & \ge \E_{c \sim f_{\mc{M}}(\perp, x_{-j})}[u_j(c)] - \E_{c \sim f_{\mc{M}}(x_j, x_{-j})}[u_j(c)] \\
& = -\omega\P(f_{\mc{M}}(\perp, x_{-j}) = 1) + \omega\P(f_{\mc{M}}(x_j, x_{-j}) = 1) \\
& = \omega\bigg(\P(f_{\mc{M}}(x_j, x_{-j}) = 1) - \P(f_{\mc{M}}(\perp, x_{-j}) = 1)\bigg) \\
& = \omega\bigg(\P(\mc{M}(x_j, x_{-j}) \ge n) - \P(\mc{M}(\perp, x_{-j}) \ge n))\bigg) \\
& = \omega\bigg(\P(n-1+a + L \ge n) - \P(n-1+L \ge n))\bigg) \\
& = \omega\bigg(\P(L \ge 1-a) - \P(L \ge 1))\bigg) \\
& = \omega\int_{1-a}^{1}\frac{\epsilon}{4}\exp(-\epsilon |z| / 2) \diff z
\end{align*} 
where the second equality follows from the distributivity of multiplication over addition in hyperreal arithmetic. The resulting integral in the last equality is some $p \in \R_{>0}$. Since $\omega$ is a positive unlimited and $p \in \R_{>0}$, $\omega p$ equals some other positive unlimited $\omega'$. But then $\omega' \le r$, contradicting the positive unlimitedness of $\omega'$. Thus, there is no real-value reward large enough to recruit participants under these conditions.
\end{ex}

Example \ref{ex:hyper} implies the following existence claim.

\begin{prop}
    There exists an $\epsilon$-differentially private mechanism $\mc{M}$, a post-processor $f$, a collection agents $[n]$,  response spaces $\mc{X}_i$ for $i \in [n]$, and a hyperreal utility function $u_j$ for some agent $j \in [n]$ such that agent $j$ will never participate in the study in the presence of any real-valued payment scheme.
\end{prop}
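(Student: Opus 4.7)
The plan is to prove the proposition by direct appeal to the construction in Example \ref{ex:hyper}, which already does essentially all of the heavy lifting. The proposition is an existence claim, so I need to exhibit concrete witnesses for each of the quantified objects: a mechanism $\mc{M}$, a post-processor $f$, a number of agents $n$, response spaces $\mc{X}_i$, and a hyperreal-valued utility function $u_j$. Each of these is supplied in the example: take $\mc{X}_i = \{\perp, 1, 2\}$ for each $i$, let $q(x) = \sum_i x_i \mb{I}(x_i \ne \perp)$ so that $\mc{M}(x) = q(x) + L$ with $L \sim Lap(2/\epsilon)$ is $\epsilon$-differentially private (by the Laplace mechanism lemma applied to $q$, whose sensitivity is $2$), use the post-processor $f(o) = \mb{I}(o \ge n)$, and fix the utility function $u_j(c) = -\omega \mb{I}(c = 1)$ for a positive unlimited $\omega \in {}^{\star}\R$. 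Fixing the environment as $x_{-j} = (1, \ldots, 1)$ and any submission $a \in \mc{X}_j - \{\perp\}$ completes the setup.

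Next, I would restate the non-participation condition precisely: to incentivize agent $j$ to submit any $a \in \mc{X}_j - \{\perp\}$ rather than $\perp$, a real-valued reward $r$ would need to satisfy $r + \E_{c \sim f_{\mc{M}}(a,x_{-j})}[u_j(c)] \ge \E_{c \sim f_{\mc{M}}(\perp,x_{-j})}[u_j(c)]$, i.e., $r$ must be at least the gap in expected utilities. The computation in Example \ref{ex:hyper} reduces this gap to the product $\omega p$, where $p = \int_{1-a}^{1} (\epsilon/4)\exp(-\epsilon|z|/2)\diff z \in \R_{>0}$ is a strictly positive real number (since the Laplace density is strictly positive and the interval of integration has positive length for $a \in \{1,2\}$).

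From here the crux is the closure property of the unlimiteds, which I would state as a short lemma-style observation: if $\omega \in {}^{\star}\R$ is a positive unlimited and $p \in \R_{>0}$, then $\omega p$ is also a positive unlimited, hence strictly exceeds every real number. This is the only nontrivial hyperreal fact needed, and it follows from the ordered-field construction of $^{\star}\R$ sketched earlier in Section \ref{sec:hyper_background}. Consequently, no $r \in \R$ can satisfy $r \ge \omega p$, so the voluntary-participation inequality fails for every real reward scheme and every opt-in choice $a$, which is precisely the conclusion of the proposition.

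The main obstacle, such as it is, is purely one of bookkeeping rather than mathematics: I need to make sure the reduction quantifies correctly over \emph{every} $a \in \mc{X}_j - \{\perp\}$ (not just a single one) and over \emph{every} real reward $r$, whereas the example fixes $a$ and $r$ and derives a contradiction. Since the reduction works uniformly for both $a = 1$ and $a = 2$ (the resulting integral $p$ remains a strictly positive real in each case), and since the contradiction argument against $r$ was for an arbitrary real, the upgrade from the example to the universally quantified existence claim is immediate. I would therefore conclude the proof by stating that the witnesses above satisfy the proposition, invoking Example \ref{ex:hyper} for the verification.
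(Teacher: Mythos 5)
Your proposal is correct and follows the paper's own route exactly: the paper proves this proposition by direct appeal to Example~\ref{ex:hyper}, using the same witnesses, the same reduction of the required reward to $\omega p$ with $p \in \R_{>0}$, and the same closure fact that a positive real multiple of a positive unlimited is a positive unlimited. Your extra care about quantifying over all $a \in \mc{X}_j - \{\perp\}$ and all $r \in \R$ is a welcome tightening of the bookkeeping but does not change the argument.
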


On the flip side, we present a setting in which pure $\epsilon$-differential privacy is the only variant of differential privacy that induces voluntary participation.

\begin{ex}[The Court of Maimonides]
\label{ex:maim}
Suppose that John from Example~\ref{ex:hcua} were to be facing the Judge Maimonides instead of HCUA. Maimonides was a legal scholar who said ``It is better and more satisfactory to acquit a thousand guilty persons than to put a single innocent one to death''~\cite[p.\,178]{volokh97upenn-lr}.  Our fictionalized Maimonides has taken his demand for certainty even further. Maimonides insists on absolute proof before he would be willing to punish John in any way. However, if Maimonides is certain that John is a political dissident, he will sentence the John to death, which we model as having utility $-\omega$ where $\omega \in$ $^{\star}\R$ is a positive unlimited.

John views his life as being worth more than any amount of compensation.  Thus, there's no amount of compensation that will get John to choose a privacy guarantee that leaves the door open to conviction with any probability other than $0$ (or, an infinitesimal, if we allowed such).

Absolute $\epsilon$-differential privacy will never enable Maimonides to conclude that John is a political dissident with complete certainty. To see why, suppose to the contrary that Maimonides has an arbitrary $\epsilon$-differentially private mechanism $\mc{M}$ that, when combined with the post-processor $f$ from Example \ref{ex:hcua}, could classify John as a political dissident with absolute certainty when John reports $x_j \in \mc{X}_j$. Further suppose that the $\epsilon$-differentially private mechanism $f_\mc{M}$ Maimonides uses to determine John's political dissidence is \textit{minimally responsive} in the following sense: for all agents $j$ there exists $a, b \in \mc{X}_j$ and there exists $c \in \mc{C}$ such that for all environments $x_{-j}$, $\P(f_\mc{M}(a, x_{-j}) = c) \ne \P(f_\mc{M}(b, x_{-j}) = c)$. This is a reasonable assumption, as otherwise $\P(f_\mc{M}(x_j, x_{-j}) = c)$ would be the same for all $x_j$ and all $c$, meaning the output of the mechanism would be independent of John's message.  

Recall from Example \ref{ex:hcua} that $\mc{C} = \{\msf{g},\msf{n}\}$. Since Maimonides can infer John's political dissidence with probability 1, $\P(f_\mc{M}(x_j, x_{-j}) = \msf{g}) = 1$. So then $\P(f_\mc{M}(x_j, x_{-j}) = \msf{n}) = 0$. By the definition of $\epsilon$-differential privacy, for every $x'_j \in \mc{X}_j$, 

$$
\P(f_\mc{M}(x'_j, x_{-j}) = \msf{n}) \le e^{\epsilon}\P(f_\mc{M}(x_j, x_{-j}) = \msf{n}) = 0
$$

which implies $\P(f_\mc{M}(x'_j, x_{-j})= \msf{n}) = 0$, so $\P(f_\mc{M}(x'_j, x_{-j} = \msf{g}) =1$. Since $\mc{X}_j$ is binary, we have

$$
\P(f_\mc{M}(0, x_{-j}) = c) = \P(f_\mc{M}(1, x_{-j}) = c)
$$
for all $c \in \mc{C}$, contradicting the minimal responsiveness of $f_\mc{M}$.

By a symmetric argument, we similarly deduce that Maimonides could not infer $\msf{n}$ with probability 1 as well. Hence, for all $x_{-j}$ and for all $c \in \mc{C}$, $\P(f_\mc{M}(0, x_{-j}) = c), \P(f_\mc{M}(1, x_{-j}) = c) \in (0,1)$.

No matter how the other participants respond and no matter which output it selects, there will always be some probability (perhaps a tiny one) that $f_\mc{M}$ produced the output $\msf{g}$ despite every participant claiming not to be a political dissident. Thus, while the output might make Maimonides become nearly certain that John is a political dissident, he'll never reach certainty.

Under these conditions, John should reject approximate differential privacy in favor of absolute $\epsilon$-differential privacy for any $\delta > 0$ (that binds) and any size of reward. Furthermore, any (finite) value for $\epsilon$ will suffice for John to escape punishment. Thus, despite the exacting nature of John's privacy needs, we can find an $\epsilon$-differentially private mechanism such that the survey can nevertheless be very accurate with high probability.
\end{ex}

If we view opting-out as the same message as reporting $0$ to the mechanism, and allow for every agent to share John's utility function, the Example \ref{ex:maim} yields the following existence claim.

\begin{prop}
    There exists a collection agents $[n]$,  response spaces $\mc{X}_i$ for $i \in [n]$, and a collection of hyperreal utility function $u_j$ such that 
(1) for all minimally responsive binary $\epsilon$-differentially private mechanism $f_\mc{M}$, the voluntary participation constraint holds, but
(2) for all $\delta > 0$, there exists a minimally responsive binary $(\epsilon, \delta)$-differentially private mechanism $f'_\mc{M}$ such that the voluntary participation constraint does not hold. 
 \end{prop}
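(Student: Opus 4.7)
The plan is to lift Example~\ref{ex:maim} directly into the quantifier structure of the proposition with minimal repackaging. I take $n = 2$, $\mc{X}_1 = \mc{X}_2 = \{0,1\}$ (identifying $\perp$ with $0$, as stipulated in the paragraph preceding the statement), $\mc{C} = \{\msf{g}, \msf{n}\}$, and assign every agent the utility $u_j(c) = -\omega\,\mb{I}(c = \msf{g})$ for a fixed positive unlimited $\omega \in {}^{\star}\R$.

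For (1), the core argument already appears inside the example. Under $\epsilon$-differential privacy with minimal responsiveness, applying the $e^{\epsilon}$ inequality to both $c = \msf{g}$ and $c = \msf{n}$ and propagating across the pair of neighbors rules out both $\P(f_\mc{M}(x_j, x_{-j}) = c) = 1$ and $\P(f_\mc{M}(x_j, x_{-j}) = c) = 0$. Under the intended reading, in which the post-processor $f$ encodes Maimonides's certainty-only verdict, the $\msf{g}$ event can only fire on mechanism outputs where dissidence is inferred with probability exactly one, which the previous sentence has just ruled out. Hence $\P(f_\mc{M}(x_j, x_{-j}) = \msf{g}) = 0$ for every $x_j$ and $x_{-j}$, every agent's expected utility collapses to $0$ regardless of the report, and the voluntary participation constraint is satisfied by any $a \in \mc{X}_j - \{\perp\}$, in particular by $a = 1$. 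If one instead reads $f_\mc{M}$ as an arbitrary $\epsilon$-DP mechanism composed with Maimonides's verdict, the conclusion is even sharper: there are \emph{no} minimally responsive such $f_\mc{M}$ at all, so (1) holds vacuously.

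For (2), for each $\delta > 0$ I would construct the explicit witness $f'_\mc{M}$ that outputs $\msf{g}$ with probability $\delta$ when $x_j = 1$ and $\msf{n}$ otherwise (ignoring $x_{-j}$). A short case check shows $f'_\mc{M}$ is $(\epsilon, \delta)$-differentially private for every $\epsilon \ge 0$: the $\delta$ slack absorbs the entire differential at $\msf{g}$, and the $\msf{n}$ inequality reduces to $(1 - \delta)(e^{\epsilon} - 1) \ge 0$. Minimal responsiveness is witnessed by the triple $(a, b, c) = (1, 0, \msf{g})$. Computing expected utilities gives $\E_{c \sim f'_\mc{M}(1, x_{-j})}[u_j(c)] = -\omega\delta$ and $\E_{c \sim f'_\mc{M}(0, x_{-j})}[u_j(c)] = 0$, and since $\omega\delta$ is a positive unlimited by the arithmetic in Section~\ref{sec:hyper_background}, we have $-\omega\delta < 0$. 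Because $a = 1$ is the only non-$\perp$ option, the voluntary participation constraint is violated.

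The main obstacle will be pinning down (1) rigorously: the $\epsilon$-DP inequality on its own does not force $\P(f_\mc{M}(1, x_{-j}) = \msf{g}) \le \P(f_\mc{M}(0, x_{-j}) = \msf{g})$, so the argument must lean on the Maimonides-specific verdict structure (punish only on certainty) to conclude that the $\msf{g}$-probability is exactly zero rather than merely small. Once that reading is fixed, both (1) and (2) follow from Example~\ref{ex:maim} using essentially the arithmetic displayed there.
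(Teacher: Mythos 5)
Your proposal follows the paper's own route: the paper proves this proposition by direct appeal to Example~\ref{ex:maim}, exactly as you do, and your reconstruction of part (1) — including the observation that the bare $\epsilon$-DP inequality does not by itself order $\P(f_\mc{M}(1,x_{-j})=\msf{g})$ against $\P(f_\mc{M}(0,x_{-j})=\msf{g})$, so the argument must lean on Maimonides's punish-only-on-certainty semantics to drive the $\msf{g}$-probability to exactly zero — is precisely the (somewhat informal) reading the paper relies on. Your explicit $(\epsilon,\delta)$-DP witness for part (2) goes beyond the paper, which never constructs one, and the arithmetic there checks out. The one genuine slip is in that witness: with $n=2$, a mechanism that looks only at $x_j$ has identical output distributions under any change to the \emph{other} agent's report, so it fails minimal responsiveness for that agent (the definition quantifies over all agents). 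This is easily repaired by taking $n=1$, or by letting the mechanism output $\msf{g}$ with probability $\delta(x_1+x_2)/2$, which remains $(\epsilon,\delta)$-DP, is responsive to both coordinates, and still gives every participant expected utility $-\omega\delta/2$ below opting out.
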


\section{Conclusion}
\label{sec:conclusion}

In this manuscript, we used utility theory to compare pure $\epsilon$-differential privacy, approximate differential privacy, and pure $\delta$-differential privacy. While utility theory is but one way to examine the tradeoffs incur by study participants, even in simple models where an individual's utility is only a function over the consequence space $\mc{C}$, our results are nuanced, demonstrating the richness and complexity of the problem space. 

We began our inquiry by examining Proposition \ref{prop:ratio}, a commonly referenced expected utility bound in differential privacy. Using this bound, we considered a candidate measurement method to compare changes in expected utility: a ratio-scale. However, this ratio-scale was unable to handle the interval nature of all real-valued utility representations of $\succeq_j$. We demonstrated this in Example \ref{ex:hcua} by constructing two distinct utility functions $u_j$ and $v_j$ that both represented $\succeq_j$, yet yielded different decision-theoretic conclusions. To avoid results that are heavily influenced by artifacts of a particular utility function, we found that the Euclidean distance produced consistent comparisons across all pairs of real-valued utility functions $u_j$ and $v_j$ that represent $\succeq_j$. 

In the case when participant utility functions are real-valued and bounded, we found in Theorem \ref{thm:nominal} that the difference in expected utilities is bounded. However, there are situations where an individual may forgo voluntarily joining a pure $\epsilon$-differentially private study (e.g., Example \ref{ex:ir}). We then showed in Proposition \ref{prop:compensation} that we can always find a compensation amount to incentivize participation. Since this compensation scheme is determined by $e^{\epsilon}-1+\delta|\mc{C}|$, we found that no one variant of differential privacy was universally cheaper than another. In general, the cost-effectiveness of inducing participation depends heavily on the specific values of $\epsilon$ and $\delta$ used. Therefore, in situations where privacy harms are finite and individual's can be compensated for their involvement in a study, multiple configurations of $\epsilon$ and $\delta$ can be used to incentivize participation. 

In cases where participant utility functions are hyperreal, we showed in Example \ref{ex:hyper} that there are situations where no amount of compensation can be used to induce participation. On the other end of the spectrum, we showed in Example \ref{ex:maim} that there are situations where individuals will voluntarily join a private study only the privacy guarantee is pure $\epsilon$-differential privacy.

Our work leaves open many future avenues of inquiry. For example, future work may consider preferences not only over consequences, but also preferences over the types of privacy protection afforded  \cite{kohli2018epsilon, cummings2016empirical}. Additionally, future work may also consider the preferences of the beneficiaries of a study, which may differ from the participants of the study \cite{kohli2021leveraging}. In such cases, there may be interesting and non-trivial dynamics between study participants' preferences and study beneficiaries' preferences. Taken together, the examples and analysis in this manuscript speak to the contextual nuances that arise when using utility theory and economics to understand the effects that privacy-enhancing technologies have on participants.

\bibliographystyle{ieeetr}
\bibliography{references}

\end{document}